\begin{document}

\newcommand{\mytitle}{%
Fast minimum-weight double-tree shortcutting for Metric TSP:
Is the best one good enough?}

\newcommand{\myabstract}{%
The Metric Traveling Salesman Problem (TSP) 
is a classical NP-hard optimization problem.
The double-tree shortcutting method for Metric TSP 
yields an exponentially-sized space of TSP tours,
each of which approximates the optimal solution 
within at most a factor of $2$.
We consider the problem of finding among these tours
the one that gives the closest approximation,
i.e.\ the \emph{minimum-weight double-tree shortcutting}. 
Burkard et al.\ gave an algorithm for this problem,
running in time $O(n^3+2^d n^2)$ and memory $O(2^d n^2)$,
where $d$ is the maximum node degree in the rooted minimum spanning tree.
We give an improved algorithm for the case of small $d$
(including planar Euclidean TSP, where $d \leq 4$),
running in time $O(4^d n^2)$ and memory $O(4^d n)$.
This improvement allows one to solve the problem 
on much larger instances than previously attempted.
Our computational experiments suggest
that in terms of the time-quality tradeoff,
the minimum-weight double-tree shortcutting method
provides one of the best known tour-constructing heuristics.}

\category{E.1}{DATA STRUCTURES}{Graphs and networks}
\category{F.2.2}{ANALYSIS OF ALGORITHMS AND PROBLEM COMPLEXITY}{%
  Nonnumerical Algorithms and Problems}[Computations on discrete structures]
\category{G.2.2}{DISCRETE MATHEMATICS}{Graph Theory}[Graph algorithms]

\terms{Algorithms, Experimentation, Performance, Theory}

\keywords{Approximation algorithms, Metric TSP, double-tree shortcutting}

\myfront

\markboth{%
V.~Deineko and A.~Tiskin}{%
Fast minimum-weight double-tree shortcutting for Metric TSP}

\section{Introduction}
\label{s-intro}

The Metric Travelling Salesman Problem (TSP) 
is a classical combinatorial optimization problem.
We represent a set of $n$ points in a metric space
by a complete weighted graph on $n$ nodes,
where the weight of an edge is defined 
by the distance between the corresponding points.
The objective of Metric TSP is to find in this graph 
a minimum-weight Hamiltonian cycle
(equivalently, a minimum-weight tour visiting every node at least once).
The most common example of Metric TSP is the planar Euclidean TSP, 
where the points lie in the two-dimensional Euclidean plane,
and the distances are measured according to the Euclidean metric.

Metric TSP, even restricted to planar Euclidean TSP, 
is well-known to be NP-hard \cite{Papadimitriou:77}.
Metric TSP is also known to be NP-hard to approximate 
to within a ratio $1.00456$,
but polynomial-time approximable to within a ratio $1.5$.
Fixed-dimension Euclidean TSP is known to have a PTAS
(i.e.\ a family of algorithms 
with approximation ratio arbitrarily close to $1$) \cite{Arora:98};
this generalises to any metric 
defined by a fixed-dimension Minkowski vector norm.

Two simple approaches, 
the \emph{double-tree method} \cite{Rosenkrantz+:77}
and the \emph{Christofides method} \cite{Christofides:76,Serdyukov:78},
allow one to approximate the solution of Metric TSP 
within a factor of $2$ and $1.5$, respectively.
Both methods belong to the class of \emph{tour-constructing heuristics},
i.e.\ ``heuristics that incrementally construct a tour 
and stop as soon as a valid tour is created'' \cite{Johnson_McGeoch:02}.
In both methods, we build an Eulerian graph on the given point set,
select an Euler tour of the graph,
and then perform \emph{shortcutting} on this tour
by removing repeated nodes, until all node repetitions are removed.
In general, it is not prescribed which one of several occurrences
of a particular node to remove.
Therefore, the methods yield an exponentially-sized space of TSP tours
(shortcuttings of a specific Euler tour in a specific Eulerian graph),
each approximating the optimal solution 
within a factor of $2$ (respectively, $1.5$).

The two methods differ in the way 
the initial weighted Eulerian graph is constructed.
Both start by finding the graph's minimum-weight spanning tree (MST).
The double-tree method then doubles every edge in the MST,
while the Christofides method adds to the MST a minimum-weight matching 
built on the set of odd-degree nodes\@.
The weight of the resulting Euler tour 
exceeds the weight of the optimal TSP tour
by at most a factor of $2$ (respectively, $1.5$),
and the subsequent shortcutting can only decrease the tour weight.

While any tour obtained by shortcutting of the original Euler tour
approximates the optimal solution within the specified factor,
clearly, it is still desirable to find 
the shortcutting that gives the closest approximation.
Given an Eulerian graph on a set of points,
we will consider its \emph{minimum-weight shortcutting}
across all shortcuttings of all possible Euler tours of the graph.
We shall correspondingly speak 
about \emph{the minimum-weight double-tree}
and \emph{the minimum-weight Christofides} methods.

Unfortunately, for general Metric TSP, both the double-tree and Christofides
minimum-weight shortcutting problems are NP-hard.
Consider an instance of the Hamiltonian cycle problem on an unweighted graph;
this can be regarded as an instance of Metric TSP with weights 1 and 2.
Add an extra node connected to all the original nodes by edges of weight 1,
and take the newly added edges as the MST\@.
It is easy to see that 
the resulting minimum-weight double-tree shortcutting problem
is equivalent to the original Hamiltonian cycle problem.
The minimum-weight double-tree shortcutting problem
was believed for a long time to be NP-hard even for planar Euclidean TSP,
until a polynomial-time algorithm 
was given by \citeN{Burkard+:98}.
This is the algorithm we improve upon in the current paper.
In contrast, the minimum-weight Christofides shortcutting problem 
remains NP-hard even for planar Euclidean TSP
\cite{Papadimitriou_Vazirani:84}.

In the rest of this paper, we will mainly deal with the \emph{rooted MST},
which is obtained from the MST 
by selecting an arbitrary node as \emph{the root}.
In the rooted MST, the terms \emph{parent}, \emph{child}, 
\emph{ancestor}, \emph{descendant}, \emph{sibling}, \emph{leaf} 
all have their standard meaning.
Let $d$ denote the maximum number of children per node in the rooted MST.
Note that in the Euclidean plane, 
the maximum degree of an unrooted MST is at most $6$.
Moreover, a node can have degree equal to $6$,
only if it is surrounded by six equidistant nodes
forming a regular hexagon;
we can exclude this degenerate case from consideration
by a slight perturbation of the input points.
This leaves us with an unrooted MST of maximum degree $5$.
By choosing a node of degree less than $5$ as the root,
we obtain a rooted MST with $d \leq 4$.

The minimum-weight double-tree shortcutting algorithm
of \cite{Burkard+:98} applies to the general Metric TSP,
and runs in time $O(n^3+2^d n^2)$ and memory $O(2^d n^2)$.
In this paper, we give an improved algorithm%
\footnote{Note that Burkard et al.\ \cite{Burkard+:98} also
give an $O(2^d n^3)$ algorithm for a more general TSP-type problem,
where the set of admissible tours is restricted by a given PQ-tree.
Our algorithm does not improve on the algorithm of \cite{Burkard+:98}
for this more general problem.}
for the case of small $d$,
running in time $O(4^d n^2)$ and memory $O(4^d n)$.
In the planar Euclidean case, both above algorithms
run in polynomial time and memory.

We then describe our implementation of the new algorithm,
which incorporates a couple of additional heuristic improvements
designed to speed up the algorithm and to increase its approximation quality.
Computational experiments show that the approximation quality 
and running time of our implementation
are among the best known tour-constructing heuristics.

A preliminary version of this paper appeared as \cite{Deineko_Tiskin:07}.

\section{The algorithm}
\label{s-alg}

\subsection{Preliminaries}

Let $G$ be a weighted graph
representing the Metric TSP problem on $n$ points.
The double-tree method consists of the following stages:
\begin{itemize}
\item construct the minimum spanning tree of $G$;
\item duplicate every edge of the tree, obtaining an $n$-node Eulerian graph;
\item select an Euler tour of the double-tree graph;
\item reduce the Euler tour to a Hamiltonian cycle 
by repeated \emph{shortcutting}, 
i.e. replacing a node sequence $a,b,c$ by $a,c$,
as long as node $b$ appears elsewhere in the current tour.
\end{itemize}
We say that a Hamiltonian cycle \emph{conforms} 
to the doubled spanning tree,
if it can be obtained from that tree by shortcutting one of its Euler tours.
We also extend this definition to paths,
saying that a path conforms to the tree,
if it is a subpath of a conforming Hamiltonian cycle.

In our minimum-weight double-tree shortcutting algorithm,
we refine the bottom-up dynamic programming approach of \cite{Burkard+:98}.
Initially, we select an arbitrary node $r$ as the root of the tree.
For a node $u$, we denote by $C(u)$ the set of all children of $u$,
and by $T(u)$ the node set of the maximal subtree rooted at $u$,
i.e.\ the set of all descendants of $u$ (including $u$ itself).
For a set of siblings $U$,
we denote by $T(U)$ the (disjoint) union of all subtrees $T(u)$, $u \in U$.
When $U$ is empty, $T(U)$ is also empty.

The characteristic property of a conforming Hamiltonian cycle is as follows:
for every node $u$, 
the cycle must contain all nodes of $T(u)$ consecutively in some order.
For an arbitrary node set $S$, 
we will say that a path through the graph \emph{sweeps} $S$,
if it visits all nodes of $S$ consecutively in some order.
In this terminology, a conforming Hamiltonian cycle
must, for every node $u$, contain a subpath sweeping the subtree $T(u)$.

In the rest of this section, we denote
the metric distance between $u$ and $v$ by $d(u,v)$.
We use the symbol $\uplus$ to denote disjoint set union.
For brevity, given a set $A$ and an element $a$,
we write $A \uplus a$ instead of $A \uplus \{a\}$,
and $A \setminus a$ instead of $A \setminus \{a\}$.

\subsection{Upsweep: Computing solution weight}

The algorithm proceeds by computing minimum-weight sweeping paths 
in progressively increasing subtrees,
beginning with the leaves and finishing with the whole tree $T(r)$.
A similar approach is adopted in \cite{Burkard+:98},
where in each subtree, all-pairs minimum-weight sweeping paths are computed.
In contrast, our algorithm only computes 
single-source minimum-weight sweeping paths originating at the subtree's root.
This leads to substantial savings in time and memory.

A non-root node $v \in C(u)$ is \emph{active},
if its subtree $T(v)$ has already been processed,
but its parent's subtree $T(u)$ has not yet been processed.
In every stage of the algorithm, we choose \emph{the current node} $u$,
so that all children of $u$ (if any) are active.
We call $T(u)$ \emph{the current subtree}.
Let $V \subseteq C(u)$, $a \in T(V)$.
We denote by $D^u_V(a)$ the weight of the shortest conforming path
starting from $u$, sweeping the subtree $u \uplus T(V)$, and finishing at $a$.

Consider the current subtree $T(u)$.
Processing this subtree will yield 
the values $D^u_V(a)$ for all $V \subseteq C(u)$, $a \in T(V)$.
In order to process the subtree, we need the corresponding values
for all subtrees rooted at the children of $u$.
More precisely, we need the values $D^v_W(a)$
for every child $v \in C(u)$, every subset $W \subseteq C(v)$, 
and every destination node $a \in T(W)$.
We do not need any explicit information on subtrees 
rooted at grandchildren and lower descendants of $u$.

Given the current subtree $T(u)$, the values $D^u_V(a)$
are computed inductively for all sets $V$ of children of $u$.
The induction is on the size of the set $V$.
The base of the induction is trivial:
no values $D^u_V(a)$ exist when $V = \emptyset$.

In the inductive step, given a set $V \subseteq C(u)$,
we compute the values $D^u_{V \uplus v}(a)$
for all $v \in C(u) \setminus V$, $a \in T(v)$, as follows.
By the inductive hypothesis, 
we have the values $D^u_V(a)$ for all $a \in T(V)$.
The main part of the inductive step consists 
in computing a set of auxiliary values $D^u_{V,W}(v)$,
for all subsets $W \subseteq C(v)$.
Every such value represents the weight 
of the shortest conforming path 
starting from node $u$, sweeping the subtree $u \uplus T(V)$, 
then sweeping the subtree $T(W) \uplus v$, and finishing at node $v$.
Suppose the path exits the subtree $u \uplus T(V)$ at node $x$
and enters the subtree $T(W) \uplus v$ at node $y$.
We have
\begin{figure}[tbp]
\centering
\includegraphics[bb=148 581 291 661]{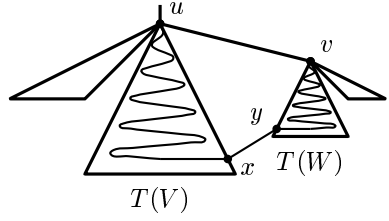}
\caption{\label{f-v} Computation of $D^u_{V,W}(v)$}
\end{figure}
\begin{equation}
D^u_{V,W}(v) =
\begin{cases}
d(u,v) & \text{if $V = \emptyset$, $W = \emptyset$} \\
\min_{y \in T(W)} \bigbra{d(u,y) + D^v_W(y)} &
  \text{if $V = \emptyset$, $W \neq \emptyset$} \\
\min_{x \in T(V)} \bigbra{D^u_V(x) + d(x,v)} &
  \text{if $V \neq \emptyset$, $W = \emptyset$} \\
\min_{x \in T(V); y \in T(W)} \bigbra{D^u_V(x) + d(x,y) + D^v_W(y)} &
  \text{if $V \neq \emptyset$, $W \neq \emptyset$}
\end{cases}
\label{eq-v}
\end{equation}
(see \figref{f-v}).
The required values $D^v_W(y)$ have been obtained previously,
while processing subtrees $T(v)$ for the active nodes $v \in C(u)$.
Note that the computed auxiliary values 
include $D^u_{V \uplus v}(v) = D^u_{V,C(v)}(v)$.

Now we can compute the values $D^u_{V \uplus v}(a)$ 
for all $a \in T(v) \setminus v = T(C(v))$.
A path corresponding to $D^u_{V \uplus v}(a)$ 
must sweep $u \uplus T(V)$, and then $T(v)$, finishing at $a$.
While in $T(v)$, the path will first sweep 
a (possibly single-node) subtree $v \uplus T(W)$, finishing at $v$.
Then, starting at $v$, the path will sweep 
the subtree $v \uplus T(\overline W)$, 
where $\overline W = C(v) \setminus W$, finishing at $a$.
Considering every possible disjoint bipartitioning
$W \uplus {\overline W} = C(v)$, such that $a \in T(\overline W)$,
we have
\begin{figure}[tbp]
\centering

\subfloat[Case $W=\emptyset$]{\label{f-a1}%
\includegraphics[bb=148 581 284 661]{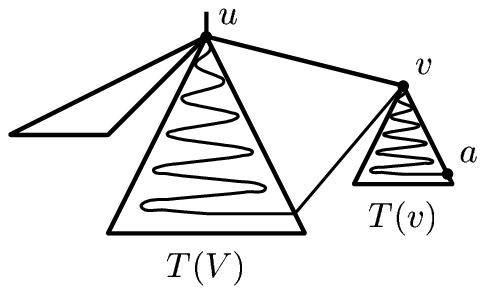}}
\qquad
\subfloat[Case $W\neq \emptyset$]{\label{f-a2}%
\includegraphics[bb=148 581 284 661]{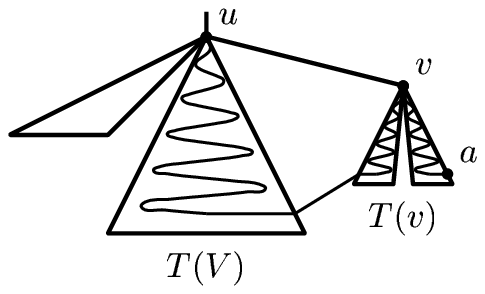}}

\caption{\label{f-a}%
Computation of $D^u_{V \uplus v}(a)$, $a \in T(v)$}
\end{figure}
\begin{equation}
D^u_{V \uplus v}(a) = 
\min_{W \uplus \overline W = C(v) :\: a \in T(\overline W)}
\bigbra{ D^u_{V,W}(v) + D^v_{\overline W}(a) }
\label{eq-a}
\end{equation}
(see \figref{f-a}).

We now have the values $D^u_{V \uplus v}(a)$ for all $a \in T(v)$.
The computation \eqref{eq-v}--\eqref{eq-a} 
is repeated for every node $v \in C(u) \setminus V$.
The inductive step is now completed.

The processing of subtree $T(u)$ terminates 
when all possible choices of subset $V$ and node $v$ have been exhausted.

Eventually, the root $r$ of the tree becomes the current node,
and we process the complete tree $T(r)$.
This establishes the values $D^r_S(a)$ 
for all $S \subseteq C(r)$, $a \in T(S)$,
which includes the values $D^r_{C(r)}(a)$ for all $a \neq r$.
The weight of the minimum-weight conforming Hamiltonian cycle
can now be determined as
\begin{equation}
\min_{a \neq r} \bigbra{ D^r_{C(r)}(a) + d(a,r) }
\label{eq-root}
\end{equation}

\begin{theorem}
\label{th-upsweep}
The upsweep algorithm computes
the weight of the minimum-weight tree shortcutting 
in time $O(4^d n^2)$ and space $O(2^d n)$.
\end{theorem}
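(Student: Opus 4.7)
The plan is to establish, in turn, correctness of the recurrences (\ref{eq-v})--(\ref{eq-root}), the $O(4^d n^2)$ running-time bound, and the $O(2^d n)$ space bound. Correctness proceeds by induction on the size of the subtree being processed. In the inductive step, I would argue that every conforming path from $u$ that sweeps $u \uplus T(V \uplus v)$ and ends at $a \in T(v)$ decomposes uniquely in the form prescribed by (\ref{eq-a}): since $T(v)$ must appear as a consecutive block terminating at $a$, the block splits, according to the position of $v$ inside it, into two sub-sweeps determined by a bipartition $C(v) = W \uplus \overline{W}$, the first covering $T(W) \uplus v$ and ending at $v$, and the second covering $v \uplus T(\overline{W})$ and ending at $a$. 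Symmetrically, the prefix sweeping $u \uplus T(V)$ up to its exit node $x$ and the first sub-sweep entering at $y$ are glued by a jump of weight $d(x, y)$, yielding (\ref{eq-v}). Taking the minimum over all such choices reproduces $D^u_{V \uplus v}(a)$. Formula (\ref{eq-root}) is then immediate, since a conforming Hamiltonian cycle is a conforming path from $r$ closed by the edge from its last-visited node back to $r$.

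For the time bound, the dominant contribution is from (\ref{eq-v}), whose evaluation costs $O((|T(V)|+1)(|T(W)|+1))$ per tuple $(u, V, v, W)$. The key accounting step is to reverse the order of summation and attribute the $|T(V)| \cdot |T(W)|$ term to ordered pairs $(x, y)$ of distinct nodes with $x \in T(V)$ and $y \in T(W)$. Given such a pair, $u$ must be the least common ancestor $\operatorname{lca}(x, y)$, $v$ must be the child of $u$ on the path to $y$, and the children of $u$ and $v$ on the paths to $x$ and $y$ must belong respectively to $V$ and $W$; the remaining at most $2d - 3$ elements of $C(u) \cup C(v)$ are free, giving multiplicity at most $2^{2d-3}$ per pair. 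As there are $O(n^2)$ pairs, this contributes $O(4^d n^2)$. The lower-order ``$+1$'' terms, as well as the cost of (\ref{eq-a})---which is $O(2^d)$ per tuple against $O(2^d n^2)$ tuples $(u, V, v, a)$, using $\sum_u |T(u)| = O(n^2)$ and $O(2^d)$ choices of $V$ per $u$---are handled analogously and fit within the same bound.

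For the space bound, I would examine which tables $D^v_W(\cdot)$ are simultaneously resident during a post-order traversal. A table for $v$ is created when $T(v)$ finishes processing and released once $v$'s parent finishes processing. An easy induction on the traversal shows that at any instant the nodes with live tables form an antichain of pairwise disjoint subtrees, so their combined node count is at most $n$. Each such table occupies $\sum_{W \subseteq C(v)} |T(W)| = 2^{|C(v)|-1}(|T(v)|-1) \leq 2^{d-1} |T(v)|$ entries, so the combined storage is bounded by $2^{d-1} n = O(2^d n)$; the transient auxiliary scalars $D^u_{V,W}(v)$ and the running minimum in (\ref{eq-root}) contribute only $O(4^d)$ further memory.

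The main obstacle is the time analysis: a naive accounting based on the per-subtree bound $O(4^d |T(u)|^2)$ sums to $O(4^d n^3)$ in the worst case (e.g.\ a path-shaped tree), so the pair-counting argument via the least common ancestor is essential for shaving off the extra factor of $n$ and matching the claimed $O(4^d n^2)$.
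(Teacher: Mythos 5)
Your proposal is correct and follows essentially the same route as the paper's proof: the time bound comes from charging the work in (\ref{eq-v}) to pairs $(x,y)$ whose lowest common ancestor determines $u$ (and hence $v$), with a $4^d$ factor for the choices of $V,W$, and the space bound comes from the pairwise disjointness of the subtrees whose tables are live. The only differences are cosmetic: you additionally spell out correctness of the recurrences (which the paper treats as established by the algorithm's derivation), and your ``antichain'' claim should separately account for the partially built table of the current node $u$, whose subtree contains those of its active children---but that only doubles the constant, exactly as in the paper's ``active node or current node'' accounting.
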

\begin{proof}

In computation \eqref{eq-v},
the total number of quadruples $u,v,x,y$ is at most $n^2$
(since for every pair $x$, $y$,
the node $u$ is determined uniquely as the lowest common ancestor of $x$, $y$,
and the node $v$ is determined uniquely 
as a child of $u$ and an ancestor of $y$).
In computation \eqref{eq-a},
the total number of triples $u,v,a$ is also at most $n^2$
(since for every pair $u$, $a$,
the node $v$ is determined uniquely 
as a child of $u$ and an ancestor of $y$).
For every such quadruple or triple,
the computation is performed at most $4^d$ times,
corresponding to $2^d$ possible choices of each of $V$, $W$.
The cost of computation \eqref{eq-root} is negligible.
Therefore, the total time complexity of the algorithm is $O(4^d n^2)$.

Since our goal at this stage is just to compute the solution weight,
at any given moment we only need to store the values $D^u_V(a)$,
where $u$ is either an active node, or the current node 
(i.e.\ the node for which these values are currently being computed).
When $u$ corresponds to an active node,
the number of possible pairs $u,a$ is at most $n$
(since node $u$ is determined uniquely 
as the root of the active subtree containing $a$).
When $u$ corresponds to the current node,
the number of possible pairs $u,a$ is also at most $n$
(since node $u$ is fixed).
For every such pair, we need to keep at most $2^d$ values,
corresponding to $2^d$ possible choices of $V$.
The remaining space costs are negligible.
Therefore, the total space complexity of the algorithm is $O(2^d n)$.
\qed
\end{proof}

\subsection{Downsweep: Reconstructing full solution}

In order to reconstruct the minimum-weight Hamiltonian cycle itself, 
we must keep all the auxiliary values $D^u_{V,W}(v)$
obtained in the course of the upsweep computation
for every parent-child pair $u, v$.
We solve recursively the following problem: 
given a node $u$, a set $V \subseteq C(u)$, and a node $a \in T(V)$,
find the minimum-weight path $P^u_V(a)$ starting from $u$, 
sweeping subtree $u \uplus T(V)$, and finishing at $a$.
To compute the global minimum-weight Hamiltonian cycle,
it is sufficient to determine the path $P^r_{C(r)}(a)$,
where $r$ is the root of the tree, and $a$ is the node 
for which the minimum in \eqref{eq-root} is attained.

\begin{figure}[tbp]
\centering
\includegraphics[bb=148 567 362 661]{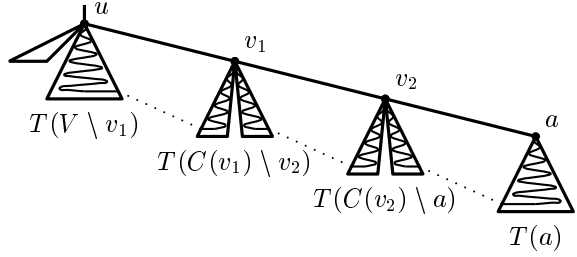}
\caption{\label{f-back}
Computation of $P^u_V(a)$, $a \in T(V)$, $k=3$}
\end{figure}

For any $u$, $V \subseteq C(u)$, $a \in T(V)$, 
consider the (not necessarily conforming or minimum-weight)
path $u=v_0 \to v_1 \to v_2 \to\dots\to v_k=a$,
joining nodes $u$ and $a$ in the tree (see \figref{f-back}).
The conforming minimum-weight path $P^u_V(a)$ 
first sweeps the subtree $u \uplus T(V \setminus v_1)$.
After that, for every node $v_i$, $0 < i < k$, the path $P^u_V(a)$ 
sweeps the subtree $v_i \uplus T(C(v_i) \setminus v_{i+1})$
as follows:
first, it sweeps a subtree $v_i \uplus T(W_i)$, 
finishing at $v_i$,
and then, starting at $v_i$, 
it sweeps the subtree $v_i \uplus T({\overline W}_i)$,
for some disjoint bipartitioning 
$W_i \uplus {\overline W_i} = C(v_i) \setminus v_{i+1}$.
Finally, the path $P^u_V(a)$ sweeps the subtree $T(a)$,
finishing at $a$.

The optimal choice of bipartitionings can be found as follows.
We construct a weighted directed \emph{layered graph} 
with a source vertex corresponding to node $u=v_0$,
a sink vertex corresponding to node $v_k=a$,
and $k-1$ intermediate layers of vertices, 
each layer corresponding to a node $v_i$, $0 < i < k$.
Each intermediate layer consists of at most $2^{d-1}$ vertices,
representing all different disjoint bipartitionings 
of the node set $C(v_i) \setminus v_{i+1}$.
The source and the sink vertices represent the trivial bipartitionings
$\emptyset \uplus (V \setminus v_1) = V \setminus v_1$ and
$C(a) \uplus \emptyset = C(a)$, respectively.
Every consecutive pair of vertex layers 
(including the source and the sink vertices)
are fully connected by forward arcs.
In particular, the arc from a vertex representing the bipartitioning
$X \uplus {\overline X}$ in layer $i$,
to the vertex representing the bipartitioning
$Y \uplus {\overline Y}$ in layer $i+1$,
is given the weight $D^{v_i}_{{\overline X}, Y}(v_{i+1})$.
It is easy to see that an optimal choice of bipartitioning corresponds to 
the minimum-weight path from the source to the sink in the layered graph.
This minimum-weight path can be found 
by a standard dynamic programming algorithm
(such as the Bellman--Ford algorithm, see e.g.\ \cite{Cormen+:01})
in time proportional to the number of arcs in the layered graph.

Let $W_1 \uplus {\overline W}_1, \dots, W_{k-1} \uplus {\overline W}_{k-1}$
now denote the $k-1$ obtained optimal subtree bipartitionings.
The $k$ arcs of the corresponding source-to-sink shortest path 
determine $k$ edges (not necessarily consecutive)
in the minimum-weight sweeping path $P^u_V(a)$.
These edges are shown in \figref{f-back} by dotted lines.
It now remains to apply the downsweep algorithm 
recursively in each of the subtrees 
$u \uplus T(V \setminus v_1)$, 
$v_1 \uplus T(W_1)$, $v_1 \uplus T({\overline W}_1)$, 
$v_2 \uplus T(W_2)$, $v_2 \uplus T({\overline W}_2)$, \ldots,
$v_{k-1} \uplus T(W_{k-1})$, $v_{k-1} \uplus T({\overline W}_{k-1})$, 
$T(a)$.

\begin{theorem}
\label{th-downsweep}
Given the output and the necessary intermediate values 
of the upsweep algorithm, the downsweep algorithm computes the edges 
of the minimum-weight tree shortcutting in time and space $O(4^d n)$.
\end{theorem}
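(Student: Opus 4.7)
The plan is to analyze the cost of a single recursive invocation of the downsweep and then amortize across all recursive calls by showing that the spine paths $v_0 \to v_1 \to \cdots \to v_k$ appearing throughout the recursion partition the edges of the rooted tree.

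For a single call with inputs $(u, V, a)$ and spine length $k$, the work is dominated by the layered graph. This graph has $k+1$ vertex layers (including the source and sink), each intermediate layer containing at most $2^{d-1}$ vertices, one per bipartitioning of a set of size at most $d-1$. Consecutive layers are fully connected, so the total number of arcs is $O(k \cdot 4^{d-1})$; every arc weight $D^{v_i}_{\overline X, Y}(v_{i+1})$ is a stored upsweep value that can be looked up in $O(1)$. Hence a standard DAG shortest-path computation (Bellman--Ford) runs in $O(k \cdot 4^d)$ time, and all remaining per-call overhead (tracing the tree path $v_0 \to \cdots \to v_k$ and dispatching the at most $2k$ recursive subcalls) is $O(k + d)$.

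The next step is to bound $\sum k$ across the whole recursion. I claim that every edge of the rooted tree appears as a spine edge in exactly one call. Consider an edge $(p, c)$ with $c \in C(p)$ that is not on the spine of the current call on $(u, V, a)$: then $p = v_i$ for some $0 \le i \le k$, and $c$ lies in one of $V \setminus v_1$, $W_i$, $\overline W_i$ (for intermediate $i$), or $C(a)$. In every case, both endpoints of $(p, c)$ are contained in exactly one of the recursive subproblems $u \uplus T(V \setminus v_1)$, $v_i \uplus T(W_i)$, $v_i \uplus T(\overline W_i)$, or $T(a)$, so induction on subtree size shows the edge becomes a spine edge of exactly one deeper call. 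Hence $\sum k \le n - 1$ over all calls, giving total downsweep time $O(4^d n)$. For space, the auxiliary upsweep values $D^u_{V, W}(v)$ indexed by parent-child pairs $(u, v)$ and subsets $V \subseteq C(u)$, $W \subseteq C(v)$ occupy $\sum_{(u,v)} 2^{|C(u)|} \cdot 2^{|C(v)|} = O(4^d n)$ memory, which dominates the transient layered-graph storage and the $O(dn)$ recursion stack.

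The main obstacle is the amortization argument: one has to verify carefully that the recursive subproblems share only the spine nodes $v_0, \ldots, v_k$ and cover every descendant of $u$ (other than these shared nodes) exactly once, which is what extends the edge-partition property to the whole recursion. Once this is in place, summing the per-call cost yields the time bound immediately, and the space bound follows from the direct count of $(V, W)$ pairs per parent-child edge.
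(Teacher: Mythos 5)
Your proposal is correct and follows essentially the same route as the paper's proof: a per-call cost of $O(4^d k)$ for the layered-graph shortest path, amortized over all calls via edge-disjointness of the spine paths (giving $\sum k \leq n$), plus the direct count of $O(4^d)$ stored values $D^u_{V,W}(v)$ per parent--child pair for the space bound. The only difference is that you spell out the induction behind the edge-disjointness claim, which the paper simply asserts.
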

\begin{proof}
The construction of the layered graph and the minimum-weight path computation
runs in time $O(4^d k)$,
where $k$ is the number of edges in the tree path 
$u=v_0 \to v_1 \to v_2 \to\dots\to v_k=a$
in the current level of recursion.
Since the tree paths in different recursion levels are edge-disjoint,
the total number of edges in these paths is at most $n$.
Therefore, the time complexity of the downsweep algorithm is $O(4^d n)$.

By \thref{th-upsweep}, 
the space complexity of the upsweep algorithm is $O(2^d n)$.
In addition to the storage used internally by the upsweep algorithm,
we also need to keep all the values $D^u_{V,W}(v)$.
The number of possible pairs $u,v$ is at most $n$
(since node $u$ is determined uniquely as the parent of $v$).
For every such pair, we need to keep at most $4^d$ values,
corresponding to $2^d$ possible choices of each of $V$, $W$.
The remaining space costs are negligible.
Therefore, the total space complexity of the downsweep algorithm is $O(4^d n)$.
\qed
\end{proof}

\section{Heuristic improvements}
\label{s-heur}

Despite the guaranteed approximation ratio
of the double-tree shortcutting and Christofides methods,
neither has performed well in previous computational experiments
(see \cite{Johnson_McGeoch:97,Reinelt:94}).
However, to our knowledge, none of these experiments
explored the minimum-weight double-tree shortcutting approach.
Instead, the double-tree shortcutting was performed
in some suboptimal, easily computable order,
such as a depth-first tree traversal.
We shall call this method \emph{depth-first double-tree shortcutting}.

In particular, \cite{Reinelt:94} compares 37 tour-constructing heuristics, 
including the depth-first double-tree algorithm
and the Christofides algorithm,
on a set of 24 geometric instances from the TSPLIB database \cite{Reinelt:91}.
Although most instances in this experiment are quite small
(2000 or fewer points),
they still allow us to make some qualitative judgement 
about the approximation quality of different heuristics.
Depth-first double-tree shortcutting
turns out to have the lowest quality of all 37 heuristics,
while the quality of the Christofides algorithm is somewhat higher, 
but still far from the top.

Intuitively, it is clear that the reason 
for the poor approximation quality of the two algorithms 
may be in the wrong choice of the shortcutting order,
especially considering 
that the overall number of alternative choices is typically exponential.
This observation motivated us to implement
the minimum-weight double-tree shortcutting algorithm from \cite{Burkard+:98}.
It came as no surprise that this algorithm 
showed higher approximation quality
than all the tour constructing heuristics in Reinelt's experiment.
Unfortunately, Reinelt's experiment did not account 
for the running time of the algorithms under investigation.
The theoretical time complexity 
of the previous minimum-weight double-tree algorithm 
from \cite{Burkard+:98} is $O(n^3 + 2^d n^2)$;
in practice, our implementation of this algorithm 
exhibited quadratic growth in running time on most instances.
Both the theoretical and the practical running times were relatively high,
which raised some justifiable doubts 
about the overall superiority of the method.

As it was expected, the introduction 
of the new efficient minimum-weight double-tree algorithm
described in \secref{s-alg}
significantly improved the running time in our computational experiments. 
However, this improvement alone was not sufficient for the algorithm 
to compete against the best existing tour-constructing heuristics. 
Therefore, we introduced two additional heuristic improvements,
one aimed at increasing the algorithm's speed,
the other at improving its approximation quality.

The first heuristic, aimed at speeding up the algorithm,
is suggested by the well-known bounded neighbour lists
\cite[p.~408]{Johnson_McGeoch:02}.
Given a tree, we define the \emph{tree distance} 
between a pair of nodes $a$, $b$,
as the number of edges on the unique path from $a$ to $b$ in the tree.
Given a parameter $k$, the \emph{depth-$k$ list} of node $u$ 
includes all nodes in the subtree $T(u)$ 
with the tree distance from $u$ not exceeding $k$.
The suggested heuristic improvement is to limit the search 
across a subtree rooted at $u$ in \eqref{eq-v}--\eqref{eq-a}
to a depth-$k$ list of $u$ for a suitably chosen value of $k$.
Our experiments suggest 
that this approach improves the running time dramatically,
without a significant negative effect on the approximation quality.

The second heuristic, 
aimed at improving the algorithm's approximation quality, 
works by expanding the space of the tours searched,
in the hope of finding a better solution in the larger space.
Let $T$ be a (not necessarily minimum) spanning tree,
and let $\Lambda(T)$ be the set of all tours conforming to $T$,
i.e.\ the exponential set of all tours 
considered by the double-tree algorithm.
Our goal is to construct a new tree $T_1$,
such that its node degrees are still bounded by a constant, 
but $\Lambda(T) \subsetneq \Lambda(T_1)$.
We refer to the new set of tours as an \emph{enlarged tour neighbourhood}.

Consider a node $u$ in $T$, 
and suppose $u$ has at least one child $v$ which is not a leaf.
We construct a new tree $T_1$ from $T$ 
by applying the \emph{degree-increasing operation},
which makes node $v$ a leaf, and redefines 
all children of $v$ to be children of $u$.
It is easy to check that any tour conforming to $T$ also conforms to $T_1$.
In particular, the nodes of $T(v)$,
which are consecutive in any conforming tour of $T$,
are still allowed to be consecutive in any conforming tour of $T_1$.
Therefore, $\Lambda(T) \subseteq \Lambda(T_1)$.
On the other hand, sequence $w,u,v$, where $w$ is a child of $v$,
is allowed by $T_1$ but not by $T$.
Therefore, $\Lambda(T) \subsetneq \Lambda(T_1)$.

Note that the degree-increasing operation cannot be performed partially:
it would be wrong to reassign only some, instead of all, 
children of node $v$ to a new parent.
To illustrate this statement, suppose that $v$
has two children $w_1$ and $w_2$, which are both leaves. 
Let $w_2$ be redefined as a new child of $u$. 
The sequence $v, w_2, w_1$ is allowed by $T$ but not by $T_1$,
since it violates the requirement for $v$ and $w_2$ to be consecutive.
Therefore, $\Lambda(T) \not\subseteq \Lambda(T_1)$.

We apply the degree-increasing heuristic as follows.
Let $D$ be a global parameter,
not necessarily related to the maximum node degree in the original tree. 
The degree-increasing operation is performed 
only if the resulting new degree of vertex $u$ would not exceed $D$.
Given a tree, the degree increasing operation 
is applied repeatedly to construct a new tree, 
obtaining an enlarged tour neighbourhood. 
In our experiments, we used breadth-first application 
of the degree increasing operation as follows:
\begin{quote}
Root the minimum spanning tree at a node of degree $1$;\\
Let $r'$ denote the unique child of the root;\\
Insert all children of $r'$ into queue $Q$;\\
\textbf{while} queue $Q$ is not empty \textbf{do}\\
\hspace*{1em}extract node $v$ from  $Q$;\\
\hspace*{1em}insert all children of $v$ into $Q$;\\
\hspace*{1em}\textbf{if} $\deg(\mathit{parent}(v)) + \deg(v) \leq D$ 
             \textbf{then}\\
\hspace*{2em}redefine all children of $v$ to be children 
             of $\mathit{parent}(v)$
\end{quote}

To incorporate the described heuristics,
the minimum-weight double-tree algorithm from \secref{s-alg} 
was modified to take two parameters: 
the search depth $k$, and the degree limit $D$.
We refer to the double-tree algorithm with fixed parameters $k$ and $D$ 
as a \emph{double-tree heuristic DT$_{D,k}$}. 
We use DT without subscripts to denote 
the original minimum-weight double-tree algorithm,
equivalent to DT$_{1,\infty}$.

\section{Computational experiments}
\label{s-compexp}

We compared experimentally the efficiency of the original algorithm DT
with the efficiency of double-tree heuristics DT$_{D,k}$
for two different search depths $k=16,32$, 
and for four different values for the degree limit 
$D=1$ (no degree increasing operation applied), $3$, $4$, $5$.
The case $D=2$ is essentially equivalent to $D=1$, 
and therefore not considered.

The DIMACS Implementation Challenge \cite{Johnson_McGeoch:02} provided 
an excellent opportunity for testing and evaluating new approaches to the TSP\@.
Website [DIMACS], created to support the Challenge,
contains a wide range of test instances and experimental data.
In our computational experiments, 
we used uniform random Euclidean instances with 1000 points (10 instances), 
3162 points (five instances), 10000 points (three instances), 
31623 and 100000 points (two instances of each size), 
316228, 1000000, and 3168278 points (one instance of each size). 

\begin{table}[t]
\centering

\subfloat[Average excess over the Held--Karp bound (\%)]{%
\begin{tabular}{|l|rrrrrrrr|}
\hline

Size        & 1000 & 3162 & 10K  & 31K  & 100K & 316K & 1M   & 3M   \\ \hline
DT          & 7.36 & 7.82 & 8.01 & 8.19 & 8.39 & 8.40 & 8.41 & --   \\ \hline
DT$_{1,16}$ & 8.64 & 9.24 & 9.10 & 9.43 & 9.74 & 9.66 & 9.72 & 9.66 \\ \hline
DT$_{3,16}$ & 6.64 & 6.97 & 7.04 & 7.37 & 7.51 & 7.53 & 7.55 & 7.50 \\ \hline
DT$_{3,32}$ & 6.52 & 6.84 & 6.92 & 7.21 & 7.31 & 7.36 & 7.37 & 7.31 \\ \hline
DT$_{4,16}$ & 6.00 & 6.27 & 6.39 & 6.69 & 6.82 & 6.87 & 6.85 & --   \\ \hline
DT$_{4,32}$ & 5.93 & 6.22 & 6.33 & 6.60 & 6.74 & 6.78 & 6.77 & --   \\ \hline
DT$_{5,16}$ & 5.67 & 5.91 & 5.97 & 6.27 & 6.43 & 6.51 & 6.47 & --   \\ \hline
DT$_{5,32}$ & 5.62 & 5.89 & 5.93 & 6.23 & 6.38 & 6.46 & 6.43 & --\\ \hline
\end{tabular}}

\subfloat[Average normalised running time (s)]{%
\begin{tabular}{|l|rrrrrrrr|}
\hline
Size        & 1000 & 3162 & 10K   & 31K  & 100K & 316K & 1M   & 3M   \\ \hline
DT          & 0.18 & 1.56 & 15.85 & 294.38 &  3533& 51147& 156659& -- \\ \hline
DT$_{1,16}$ & 0.04 & 0.14 &  0.47 &   1.57 &  5.60 & 20.82 & 101.09 & 388.52 \\ \hline
DT$_{3,16}$ & 0.10 & 0.33 &  1.12 &   3.55 & 11.90 &  40.91 &  138.41 & 491.58 \\ \hline
DT$_{3,32}$ & 0.18 & 0.69 &  2.45 &   7.56 &  25.46 &  82.99 & 269.73 & 935.55 \\ \hline
DT$_{4,16}$ & 0.23 & 0.84 &  2.78 &   8.81 &  29.02 &  94.36 & 307.31 &  -- \\ \hline
DT$_{4,32}$ & 0.45 & 2.00 &  6.93 &  22.11 &  74.70 & 236.33 & 744.50 & -- \\ \hline
DT$_{5,16}$ & 0.62 & 2.30 &  7.79 &  24.48 &  81.35 & 253.59 & 807.74 & -- \\ \hline
DT$_{5,32}$ & 1.11 & 5.74 & 20.73 &  65.96 & 224.34 & 695.03 & 2168.95 & -- \\ \hline
\end{tabular}}

\caption{\label{tab:heuristics}
Results for DT and DT$_{D,k}$ on uniform Euclidean distances}
\end{table}

For each heuristic, 
we consider both its approximation quality and running time.
We say that one heuristic \emph{dominates} another,
if it is superior in both these respects.
Following the approach of the DIMACS Challenge,
approximation quality is measured 
in terms of the approximate solution's excess over the Held--Karp bound
(the solution to the standard linear programming relaxation of the TSP),
and the running time in terms of the ``normalised computation time'' 
(see \cite{Johnson_McGeoch:02}, [DIMACS] for details).
The experimental results, presented in Table~\ref{tab:heuristics},
clearly indicate that nearly all considered heuristics%
\footnote{Heuristic DT$_{1,32}$ is omitted from Table~\ref{tab:heuristics},
since it does not give any noticeably better results compared to DT$_{1,16}$.}
(excluding DT$_{1,16}$) dominate plain DT.
Moreover, all these heuristics (again excluding DT$_{1,16}$) 
dominate DT on each individual instance used in the experiment.

For further comparison of the double-tree heuristics 
with existing tour-constructing heuristics, 
we chose DT$_{1,16}$ and DT$_{5,16}$.

\begin{table}[p]
\centering

\subfloat[Average excess over the Held--Karp bound (\%)]{%
\begin{tabular}{|l|rrrrrrrr|}
\hline
Size & 1000 & 3162 & 10K  & 31K  & 100K & 316K & 1M   & 3M   \\ \hline
\hline
RA$^+$ & 13.96& 15.25& 15.04& 15.49& 15.43& 15.42& 15.48& 15.47 \\ \hline
Chr-S  & 14.48& 14.61& 14.81& 14.67& 14.70& 14.49& 14.59& 14.51 \\ \hline
FI     & 12.54& 12.47& 13.35& 13.44& 13.39& 13.43& 13.47& 13.49 \\ \hline
Sav    & 11.38& 11.78& 11.82& 12.09& 12.14& 12.14& 12.14& 12.10 \\ \hline
ACh    & 11.13& 11.00& 11.05& 11.39& 11.24& 11.19& 11.18& 11.11 \\ \hline
Chr-G  &  9.80&  9.79&  9.81&  9.95&  9.85&  9.80&  9.79&  9.75 \\ \hline
Chr-HK &  7.55&  7.33&  7.30&  6.74&  6.86&  6.90&  6.79&    -- \\ \hline 
\hline
MTS1 &  6.09& 8.09& 6.23&  6.33&  6.22&  6.20&  -- & -- \\ \hline
MTS3 &  5.26& 5.80& 5.55&  5.69&  5.60&  5.60&  -- & -- \\ \hline
\hline
DT$_{1,16}$& 8.64& 9.24& 9.10& 9.43&  9.74&  9.66&  9.72&  9.66 \\ \hline
DT$_{5,16}$& 5.67& 5.91& 5.97& 6.27&  6.43&  6.51&  6.47&    -- \\ \hline
\end{tabular}}

\subfloat[Average normalised running time (s)]{%
\begin{tabular}{|l|rrrrrrrr|}
\hline
Size & 1000 & 3162 & 10K   & 31K  & 100K & 316K & 1M   & 3M   \\ \hline
\hline
RA$^+$ & 0.06& 0.23&  0.71&  1.9&   5.7&  13&   60&  222 \\ \hline
Chr-S  & 0.06& 0.26&  1.00&  4.8&  21.3&  99&  469& 3636 \\ \hline
FI     & 0.19& 0.76&  2.62&  9.3&  27.7&  65&  316& 1301 \\ \hline
Sav    & 0.02& 0.08&  0.26&  0.8&   3.1&  21&  100&  386 \\ \hline
ACh    & 0.03& 0.12&  0.44&  1.3&   3.8&  28&  134&  477 \\ \hline
Chr-G  & 0.06& 0.27&  1.04&  5.1&  21.3& 121&  423& 3326 \\ \hline
Chr-HK & 1.00& 3.96& 14.73& 51.4& 247.2& 971& 3060&   -- \\ \hline 
\hline
MTS1 & 0.37& 2.56& 17.21& 213.4& 1248& 11834&  -- & -- \\ \hline
MTS3 & 0.46& 3.55& 24.65& 989.1& 2063& 21716&  -- & -- \\ \hline
\hline
DT$_{1,16}$& 0.04& 0.14& 0.47&  1.57&  5.60&  20.82& 101& 389 \\ \hline
DT$_{5,16}$& 0.62& 2.30& 7.78& 24.48& 81.35& 254   & 808&  -- \\ \hline
\end{tabular}}

\caption{\label{tab:mainResults}%
Comparison between established heuristics and DT-heuristics
on uniform Euclidean instances}
\end{table}

\begin{figure}[p]
\centering
\includegraphics[bb=152 483 443 668]{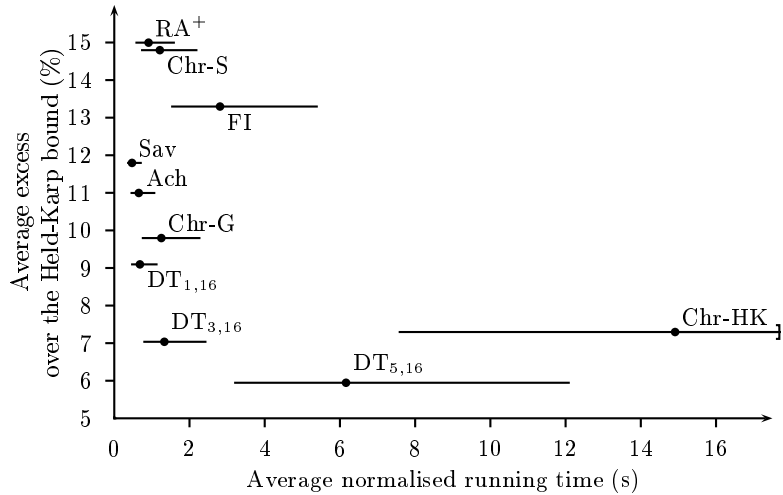}
\caption{\label{f-compare}%
Comparison between established heuristics and DT-heuristics
on uniform Euclidean instances with 10000 points}
\end{figure}

The main part of our computational experiments 
consisted in comparing the double-tree heuristics
against the most powerful existing tour-constructing heuristics.
As a base for comparison, 
we chose the heuristics analysed in \cite{Johnson_McGeoch:02},
as well as two recent matching-based heuristics from \cite{Kahng_Reda:04}.
The experiments were performed on a Sun Systems Enterprise Server E450,
under SunOS 5.8, using the gcc 3.4.2 compiler.

\tabref{tab:mainResults} shows the results of these experiments.
Abbreviations in the table 
follow \cite{Johnson_McGeoch:02,Kahng_Reda:04}:
\begin{itemize}
\item RA$^+$: Bentley's random augmented addition heuristic; 
\item Chr-S: the Christofides heuristic with standard shortcut,
  implemented by Johnson and McGeoch (JM); 
\item FI: Bentley's farthest insertion heuristic; 
\item Sav: saving heuristic, implemented by JM; 
\item ACh: approximate Christofides heuristic, implemented by JM;
\item Chr-G: the Christofides heuristic with greedy shortcut,
  implemented by JM;
\item Chr-HK: the Christofides heuristic on Held--Karp trees 
  instead of MST, implemented by Rohe;
\item MTS1, MTS3: ``match twice and stitch'' heuristics,
  implemented by Kahng and Reda.
\end{itemize}
As seen from the table,
the average approximation quality of DT$_{1,16}$
turns out to be higher than all classical heuristics 
considered in \cite{Johnson_McGeoch:02}, except Chr-HK\@.
Moreover, heuristic DT$_{1,16}$
dominates heuristics RA$^+$, Chr-S, FI, Chr-G\@.
Heuristic DT$_{5,16}$ dominates Chr-HK\@.
Heuristic DT$_{5,16}$ also compares very favourably 
with MTS heuristics, providing similar approximation quality
at a small fraction of the running time.
The above results show clearly
that double-tree heuristics deserve a prominent place 
among the best tour-constructing heuristics for Euclidean TSP.

The impressive success of double-tree heuristics must, however,
be approached with some caution.
Although the normalised time is an excellent tool
for comparing results reported in different computational experiments, 
it is only an approximate estimate of the exact running time. 
According to \cite[page 377]{Johnson_McGeoch:02},
``[this] estimate is still typically 
within a factor of two of the correct time''. 
Therefore, as an alternative way of representing 
the results of computational experiments, 
we suggest a graph of the type shown in \figref{f-compare},
which compares the heuristics' 
average approximation quality and running time
on random uniform instances with 10000 points.
A normalised time $t$ is represented by the interval $[t/2,2t]$. 
The relative position of heuristics in the comparison
and the dominance relationships can be seen clearly from the graph.
Results for other instance sizes and types are generally similar.

\begin{table}[tb]
\centering

\subfloat[Average excess over the Held--Karp bound (\%)]{%
\begin{tabular}{|l|rrrrrr|}
\hline
Size & 1000 & 3162 & 10K  & 31K  & 100K & 316K \\ \hline
\hline
RA$^+$ & 12.84& 13.88& 16.08& 15.59& 16.22& 16.33 \\ \hline
Chr-S  & 12.03& 12.79& 13.08& 13.47& 13.50& 13.45 \\ \hline
FI     &  9.90& 11.85& 12.82& 13.37& 13.96& 13.92 \\ \hline
Sav    & 13.51& 15.97& 17.21& 17.93& 18.20& 18.50 \\ \hline
ACh    & 10.21& 11.01& 11.47& 11.78& 12.00& 11.81 \\ \hline
Chr-G  &  8.08&  9.01&  9.21&  9.47&  9.55&  9.55 \\ \hline
Chr-HK &  7.27&  7.78&  8.37&  8.42&  8.46&  8.56 \\ \hline 
\hline
MTS1   &  8.90& 9.96& 11.97&  11.61&  9.45&  --  \\ \hline
MTS3   &  8.52&  9.5& 10.11&  9.72&  9.46&  --  \\ \hline
\hline
DT$_{4,16}$&  6.37& 8.24& 8.79&  9.40&  9.38&  9.39  \\ \hline
DT$_{5,16}$&  5.72& 7.17& 7.92&  8.32&  8.46&  8.42  \\ \hline
\end{tabular}}

\subfloat[Average normalised running time (s)]{%
\begin{tabular}{|l|rrrrrr|}
\hline
Size & 1000 & 3162 & 10K   & 31K  & 100K & 316K \\ \hline
\hline
RA$^+$ & 0.1& 0.2&  0.7&  1.9&   5.5&  12.7 \\ \hline
Chr-S  & 0.2& 0.8&  3.2& 11.0&  37.8& 152.8 \\ \hline
FI     & 0.2& 0.8&  2.9&  9.9&  30.2&  70.6 \\ \hline
Sav    & 0.0& 0.1&  0.3&  0.9&   3.4&  22.8 \\ \hline
ACh    & 0.0& 0.2&  0.8&  2.1&   6.4&  54.2 \\ \hline
Chr-G  & 0.2& 0.8&  3.2& 11.0&  37.8& 152.2 \\ \hline
Chr-HK & 0.9& 3.3& 11.6& 40.9& 197.0& 715.1 \\ \hline 
\hline
MTS1 &  0.78& 4.19& 45.09&  276&  1798&  --  \\ \hline
MTS3 &  0.84& 4.76& 49.04&  337&  2213&  --  \\ \hline
\hline
DT$_{4,16}$&  0.2& 0.87& 3.16&  9.55&  34.43&  120.3  \\ \hline
DT$_{5,16}$&  1.12& 4.85& 16.08&  53.35&  174&  569  \\ \hline
\end{tabular}}

\caption{\label{tab:mainResultsClustered}
Comparison between established heuristics and DT-heuristics
on clustered Euclidean instances}
\end{table}

\begin{table}[tb]
\centering

\subfloat[Average excess over the Held--Karp bound (\%)]{%
\begin{tabular}{|l|rrrrr|}
\hline
Size & 1000 & 3162 & 10K  & 31K  & 100K \\ \hline
\hline
RA$^+$ & 17.46& 16.28& 17.78& 19.88& 17.39 \\ \hline
Chr-S  & 13.36& 14.17& 13.41& 16.50& 15.46 \\ \hline
FI     & 15.59& 14.28& 13.20& 17.78& 15.32 \\ \hline
Sav    & 11.96& 12.14& 10.85& 10.87& 19.96 \\ \hline
ACh    &  9.64& 10.50& 10.22& 11.83& 11.52 \\ \hline
Chr-G  &  8.72&  9.41&  8.86&  9.62&  9.50 \\ \hline
Chr-HK &  7.38&  7.12&  7.50&  6.90&  7.42 \\ \hline
\hline
MTS1 &  7.0& 6.9& 5.1&  4.7&  4.1 \\ \hline
MTS3 &  6.2& 5.1& 4.0&  2.9&  2.7 \\ \hline
\hline
DT$_{1,16}$&  6.36& 5.99& 8.09&  9.99&  10.02 \\ \hline
DT$_{5,16}$&  6.13& 5.58& 7.65&  8.98&   9.30 \\ \hline
\end{tabular}}

\subfloat[Average normalised running time (s)]{%
\begin{tabular}{|l|rrrrr|}
\hline
Size & 1000 & 3162 & 10K   & 31K  & 100K \\ \hline
\hline
RA$^+$ & 0.1& 0.2& 0.8&  2.2&   5.6 \\ \hline
Chr-S  & 0.1& 0.2& 1.8&  3.9&  31.8 \\ \hline
FI     & 0.2& 0.8& 3.1&  9.8&  26.4 \\ \hline
Sav    & 0.0& 0.1& 0.3&  0.6&   1.4 \\ \hline
ACh    & 0.0& 0.1& 0.5&  1.5&   3.9 \\ \hline
Chr-G  & 0.1& 0.2& 1.8&  3.8&  29.5 \\ \hline
Chr-HK & 0.7& 2.2& 9.7& 50.1& 177.9 \\ \hline 
\hline
MTS1 &  -- & 1.5& 34.4&  107.3& 620.0  \\ \hline
MTS3 &  -- & 2.1& 42.4&  135.4&  1045.3 \\ \hline
\hline
DT$_{1,16}$&  0.3& 0.9& 4.1&  18.4&  49.3 \\ \hline
DT$_{5,16}$&  0.6& 2.1& 11.0&  57.1&  115.1 \\ \hline
\end{tabular}}

\caption{\label{tab:mainResultsTSPLIB}
Comparison between established heuristics and DT-heuristics
on geometric instances from TSPLIB:
pr1002, pcb1173, rl1304, nrw1379 (size 1000),
pr2392, pcb3038, fnl14461 (size 3162),
pla7397, brd14051 (size 10K),
pla33810 (size 31K),
pla859000 (size 100K).}
\end{table}

Additional experimental results for clustered Euclidean instances 
are shown in \tabref{tab:mainResultsClustered}
(with DT$_{1,16}$ replaced by DT$_{4,16}$ 
to illustrate more clearly the overall advantage of DT-heuristics),
and for TSPLIB instances in \tabref{tab:mainResultsTSPLIB}.

While we have done our best 
to compare the existing and the proposed heuristics fairly,
we recognise that our experiments are not, 
strictly speaking, a ``blind test'':
we had the results of \cite{Johnson_McGeoch:02}
in advance of implementing our method, and in particular 
of selecting the top DT-heuristics for comparison.
However, we never consciously adapted our choices 
to the previous knowledge of \cite{Johnson_McGeoch:02},
and we believe that any subconscious effect of this previous knowledge
on our experimental setup is negligible.

\section{Conclusions and open problems}
\label{s-concl}

In this paper, we have presented an improved algorithm
for finding the minimum-weight double-tree shortcutting approximation 
for Metric TSP.
We challenged ourselves to make the algorithm 
as efficient as possible.
The improvement in time complexity from $O(n^3 + 2^d n^2)$ 
to $O(4^d n^2)$ (which implies $O(n^2)$ for the Euclidean TSP)
placed the minimum-weight double-tree shortcutting method as a peer 
in the set of the most powerful tour-constructing heuristics. 
It is known that most such heuristics 
have theoretical time complexity $O(n^2)$,
and in practice often exhibit near-linear running time.
The minimum-weight double-tree method now also fits this pattern.

While we have not been using 
the language of parameterised complexity \cite{Downey_Fellows:98},
we (and the previous work \cite{Burkard+:98}) 
have in fact demonstrated that the problem of finding 
the minimum-weight double-tree tour for Metric TSP
is fixed-parameter tractable 
(where the maximum degree of the MST is the relevant parameter).
It would be interesting to see if this connection 
with parameterised complexity theory can be extended further,
e.g.\ by using any of the established techniques 
for designing fixed-parameter tractable algorithms.

Our results should be regarded only as a first step 
in exploring new opportunities. 
Particularly, the minimum spanning tree 
is not the only possible choice of the initial tree. 
Instead, one can choose from a variety of trees,
e.g.\ Held and Karp (1-)trees, approximations to Steiner trees, 
spanning trees of Delaunay graphs, etc.
This variety of choices merits a further detailed exploration.

It is well-known that when the initial tree is a path,
the resulting double-tree tour neighborhood 
is the set of all \emph{pyramidal tours} \cite{Burkard+:98}. 
In this case, a dozen of conditions on the distance matrix 
are known (see e.g. \cite{Burkard+:98_SIAM}),
which guarantee that the tour neighbourhood contains 
the absolute minimum-weight tour.
It may be possible to generalise this approach
by identifying new special types of trees 
and conditions on the distance matrices, 
which would guarantee that the minimum-weight double-tree algorithm
finds an absolute minimum-weight tour.
For more results on polynomial solvability of TSP
with special conditions imposed on the distance matrix, 
see \cite{Burkard+:98_SIAM,Deineko+:06}.

The minimum-weight shortcutting problem for the Christofides graph
remains NP-hard even in the planar Euclidean metric.
However, our algorithm turns out to be applicable also to this problem
on certain classes of instances.
It can be shown that if the Christofides graph is a \emph{cactus}
(i.e.\ all its cycles are pairwise edge-disjoint),
then the set of all its shortcuttings 
is a subset of the set of all double-tree shortcuttings.
Therefore, our algorithm, as well as the algorithm of \cite{Burkard+:98},
can be used to find efficiently the minimum-weight shortcutting
when the Christofides graph is a cactus.
In particular, such a shortcutting can be found in polynomial time
in the planar Euclidean metric.

Our efforts invested into theoretical improvements of the algorithm, 
supported by a couple of additional heuristic improvements, 
have borne the fruit: 
computational experiments with the minimum-weight double-tree algorithm 
show that it becomes one of the best known tour constructing heuristics.
It appears that the double-tree method is also well suited
for local search improvements based of transformations of trees 
and searching the corresponding tour neighborhoods.
One can easily imagine various tree transformation techniques
that could make our method even more powerful.

\section{Acknowledgements}

The authors thank an anonymous referee of a previous version of this paper,
whose detailed comments helped to improve it significantly.
The MST subroutine in our code 
is courtesy of the Concorde project [Concorde].


\bibliographystyle{acmtrans}
\bibliography{auto,tsp,books}
\nocite{DIMACS,Concorde}

\end{document}